
\documentclass{article}
\pdfpagewidth=8.5in
\pdfpageheight=11in
\usepackage{ijcai19}

\usepackage{times}
\usepackage{soul}
\usepackage{url}
\usepackage[hidelinks]{hyperref}
\usepackage[utf8]{inputenc}
\usepackage[small]{caption}
\usepackage{graphicx}
\usepackage{amsmath}
\usepackage{booktabs}
\usepackage{algorithm}
\usepackage{algorithmic}
\urlstyle{same}





\usepackage{subcaption}

\usepackage{amssymb,amsthm,amsmath}
\usepackage{color}
\usepackage{lipsum}
\usepackage{graphicx}

\RequirePackage[textsize=scriptsize]{todonotes}
\usepackage{enumitem}

\newtheorem{theorem}{Theorem}
\newtheorem{corollary}{Corollary}
\newtheorem{lemma}{Lemma}

\newcommand{\E}{\mathbb{E}}
\DeclareMathOperator{\var}{Var}
\newcommand{\indep}{\bot\!\!\!\!\bot}

\newcommand{\R}{\mathbb{R}}
\newcommand{\N}{\mathcal{N}}

\title{Three-quarter Sibling Regression for Denoising Observational Data}

\author{
Shiv Shankar$^1$\footnote{Contact Author}\and
Daniel Sheldon$^{1,2}$\and
Tao Sun$^{3}$\and
John Pickering$^{4}$\And
Thomas G. Dietterich$^{5}$ \\
\affiliations
$^1$University of Massachusetts Amherst\\
$^2$Mount Holyoke College\\
$^3$Amazon\\
$^4$University of Georgia\\
$^5$Oregon State University\\
\emails
\{sshankar, sheldon\}@umass.edu,
suntao.st@gmail.com, 
pick@discoverlife.org,
tgd@oregonstate.edu
}

\begin{document}

\maketitle

\begin{abstract}
Many ecological studies and conservation policies are based on field observations of species, which can be affected by systematic variability introduced by the observation process.
A recently introduced causal modeling technique called ``half-sibling regression'' can detect and correct for systematic errors in measurements of multiple independent random variables.
However, it will remove intrinsic variability if the variables are dependent, and therefore does not apply to many situations, including modeling of species counts that are controlled by common causes.
We present a technique called ``three-quarter sibling regression'' to partially overcome this limitation. It can filter the effect of systematic noise when the latent variables have observed common causes.
We provide theoretical justification of this approach, demonstrate its effectiveness on synthetic data, and show that it reduces systematic detection variability due to moon brightness in moth surveys.

\end{abstract}

\section{Introduction}

Observational data is increasingly important across a range of domains and may be affected by measurement error. Failure to account for measurement error may lead to incorrect inferences.
For example, instrument noise in telescope data can prevent detections of exoplanet transits~\cite{Scholkopf15}; under-reporting of drug use may lead to biased public health decisions~\cite{adams2019learning}; label noise in machine learning training data may lead to suboptimal models~\cite{nettleton2010study,frenay2014classification}; and imperfect detection in ecological surveys may lead to incorrect conclusions about species populations and demographics without the proper modeling~\cite{MacKenzie2002,hutchinson2017species}.
It is therefore important to develop statistical approaches to model and correct for measurement errors.

This paper is motivated by the analysis of ecological survey data.
Surveys conducted by humans provide information about population sizes and dynamics for scientific understanding of animal populations and for setting conservation policies.
However, humans may fail to detect a species for a range of reasons, including animal behaviors, weather, and observer skill.
Making decisions to sustain animal populations requires correctly interpreting survey data with these sources of error.

\begin{figure*}[t]
  \begin{minipage}[b]{0.34\linewidth}
    \begin{subfigure}[b]{0.48\linewidth}
      \centering
      \includegraphics[width=0.9\linewidth]{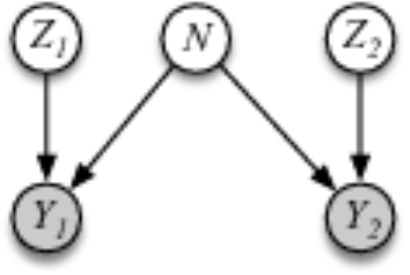}
      \caption{\label{fig:half-sibling}}
    \end{subfigure}
    \hfill
    \begin{subfigure}[b]{0.48\linewidth}
      \centering
      \includegraphics[width=0.9\linewidth]{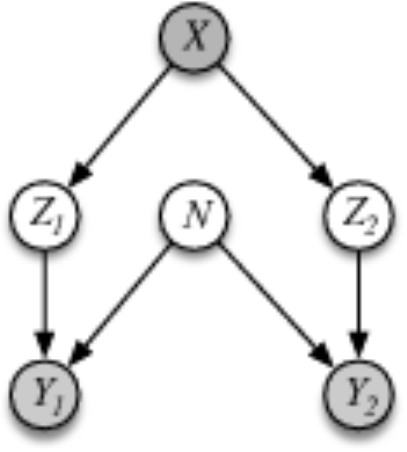}
      \caption{\label{fig:three-quarter-sibling}}
    \end{subfigure}
    \addtocounter{figure}{-1} 
    \captionof{figure}{(a) Half-sibling regression. (b) Three-quarter sibling regression.}
  \end{minipage}
  \hfill
  \begin{minipage}[b]{0.63\linewidth}
    \includegraphics[width=\linewidth]{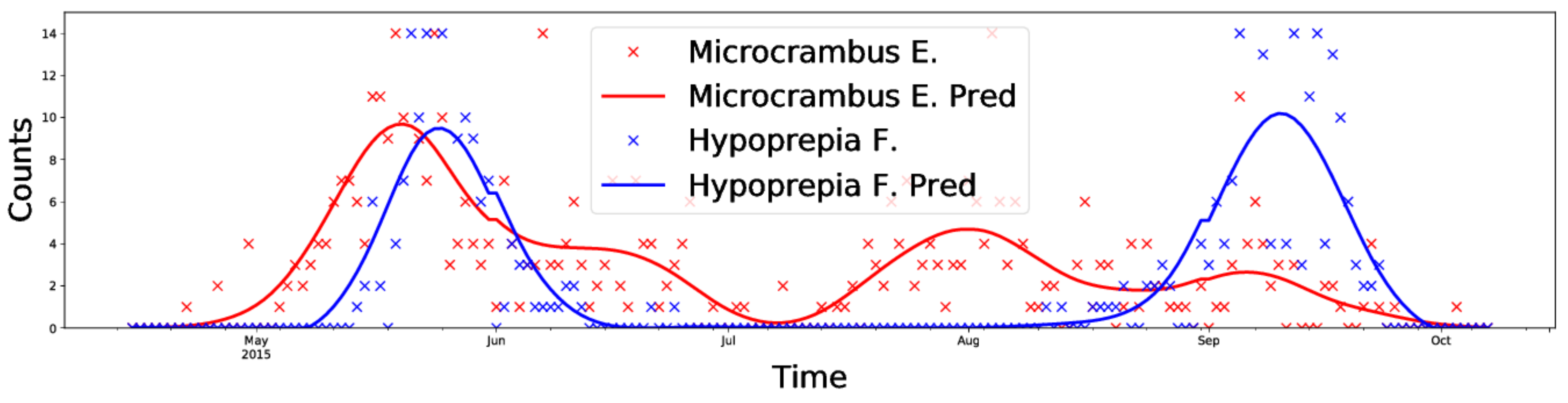}
    \captionof{figure}{\label{fig:moth-example}
      Seasonal patterns of \emph{Microcrambus elegans} and \emph{Hypoprepia fucosa} induce a correlation in their abundance, even though there is no direct causal relationship.}
  \end{minipage}
\end{figure*}

There are several lines of existing work for handling measurement error.
Prior work in the ecology literature has focused on explicitly modeling the detection process using latent-variable models~\cite{MacKenzie2002,Royle2004} \nocite{Dail2011}.
These generally assume some repetition in surveys, which helps distinguish variability in the detection process from intrinsic variability in animal counts.
As citizen-science data increases in importance, another line of work has sought to model observer variability, either directly in latent variable models~\cite{yu2010modeling,yu2014latent,hutchinson2017species}, or as separate metrics~\cite{yu2014clustering,kelling2015can}, for example, to help explain observer-related variability in regression models.
Latent-variable models are also used to model observation error in public health and machine learning with label noise~\cite{frenay2014classification}.
Under certain parametric assumptions, the parameters of latent-variable models for measurement error are identifiable---meaning it is possible to correctly attribute variability to the detection process as opposed to the underlying process---even \emph{without} repeated surveys~\cite{lele2012dealing,solymos2016revisiting,adams2019learning}. \nocite{knape2016assumptions,Knape2015}
However, there is also vigorous debate in the same literature about the assumptions required for identifiability.

Causal modeling is an appealing alternative to parametric latent-variable models.
\cite{Scholkopf15} presented a method called ``half-sibling regression'' that uses causal independence assumptions to detect and remove systematic measurement error.
Independence assumptions are significantly easier to reason about than specific parametric assumptions about the relationships between variables.
The basic idea is to examine simultaneous measurements of multiple quantities that are known \emph{a priori} to be independent.
Any dependence in the measurements must be due to measurement error, which can be quantified and partially removed. 
Although appealing, half-sibling regression only applies when the hidden variables of interest are independent.
We wish to apply similar reasoning to remove systematic noise in surveys of many species.
If counts for all species are lower than expected on a given day, this may well be due to detectability and not actual changes in the populations.
However, there are many common factors such as habitat and time of year that influence the true counts of different species, so these variables are \emph{not} independent, and half-sibling regression does not apply. 

We introduce a method called ``three-quarter sibling regression'' (3QS-regression) that extends half-sibling regression and can remove systematic errors in measurements of quantities that have observed common causes.
Three-quarter sibling regression is derived from a causal model and has a simple interpretation when applied to species counts: the residuals of one species with respect to predictions made using the causal variables are used to correct counts of another species.
We prove bounds on the ability of 3QS-regression to reduce measurement error under different assumptions.
We apply 3QS-regression to a moth survey data set and show that it effectively reduces measurement error caused by moon brightness, which makes it harder to attract moths to artificial lights at survey sites.

\section{Background: Half-Sibling Regression}


The model motivating half-sibling regression is shown in Figure~\ref{fig:half-sibling}.
Here, $Z_1$ and $Z_2$ are two quantities of interest that are known \emph{a priori} to be independent.
In the application of~\cite{Scholkopf15}, these are the brightness values of two distant stars.
The variables $Y_1$ and $Y_2$ are measurements of $Z_1$ and $Z_2$ that are affected by a common noise mechanism $N$, for example, jitter in a telescope. 
In this model, any dependence between $Y_1$ and $Y_2$ must be introduced by the noise mechanism.
The goal of half-sibling regression is to estimate one target value $Z_1$.
It does so using the estimator
$$
\hat{Z}_1 = Y_1 - \E[Y_1 \mid Y_2]
$$
which has the interpretation of subtracting from $Y_1$ the portion of $Y_1$ that can be predicted by $Y_2$. 
In practice, $Y_2$ may be vector valued (e.g., the brightness measurements of \emph{many} other stars), and the graphical model need not follow the exact form given in Figure~\ref{fig:half-sibling} as long as $Y_2 \indep Z_1$ but
$Y_2$ is \emph{not} independent of $Y_1$ given $N$, so that $Y_2$ contains some ``signature'' of the noise process that produces $Y_1$ from $Z_1$.


%

\subsection{Throwing out the Baby with the Bathwater} 
The key limitation of half-sibling regression for our purposes occurs when $Z_1$ and $Z_2$ are \emph{not} independent a priori, for example, due to a common cause.
In our application $Z_1$ and $Z_2$ will represent the counts of different species in a survey area, which are affected by common factors such as the habitat and time of year (see Fig.~\ref{fig:moth-example}).
In this setting, $N$ represents factors that affect detectability and are shared across species. In our moth example, moon brightness is key factor, which induces a \emph{second} source of dependence between moth counts.
More generally, $N$ may include transient factors such as weather, or observer attributes such as skill.


If $Z_1$ is \emph{not} independent of $Z_2$, there are two sources of dependence between $Y_1$ and $Y_2$: (1) the \emph{a priori} dependence induced by $Z_1$ and $Z_2$, and (2) the dependence introduced by the common noise mechanism $N$. 
If half-sibling regression is applied in this case, the correction term $\E[Y_1 \mid Y_2]$ will remove some of the true signal, or, as described by \cite{Scholkopf15}, it will throw out (some of) the baby with the bathwater.
The variable $Y_2$ contains information about the noise mechanism, but it is unclear whether or how this can be teased apart from the \emph{a priori} dependence.

\section{Three-Quarter Sibling Regression}



We consider the model shown in Figure~\ref{fig:three-quarter-sibling}. This extends half-sibling regression by adding the observed variable $X$, which is causal for both $Z_1$ and $Z_2$. We call $X$ the \emph{process covariates}. A key assumption of this model is that $Z_1 \indep Z_2 | X$. This implies two things. First, there is no direct causal link between $Z_1$ and $Z_2$, which is appropriate for most species pairs in our application to ecological surveys. Second, there are no \emph{unobserved} common causes of $Z_1$ and $Z_2$. It is up to the modeler to judge the validity of this assumption, which is standard in causal modeling.\footnote{If either assumption fails it will lead again to the problem of (partially) throwing out the baby with the bathwater.} In our moth survey application, time-of-year will be the single process covariate, and it is reasonable to assume that there are no other common causes. 

In this model, $Y_1$ and $Y_2$ are now ``three-quarter'' siblings: they share one parent and their unshared parents are siblings.\footnote{This term is most commonly used to describe animal kinship relationships} Mathematically, the key assumption is $Z_1 \indep Y_2 \mid X$.
One of our results will also require $N \indep X$. Again, although we are motivated by the particular generative model of Figure~\ref{fig:three-quarter-sibling}, the method applies to any model that meets these assumptions.
For example, the symmetry of the model and presence of $Z_2$ is not required --- $Y_2$ can be \emph{any} variable that contains some information about the noise mechanism and is conditionally independent of $Z_1$ given $X$.

\subsection{Estimator}
The ``three-quarter sibling regression'' estimator or \emph{3QS-estimator} is:
\begin{equation}
\hat{Z}_1 = Y_1 - \E\big[\underbrace{Y_1 - \E[Y_1 \mid X]}_{\text{residual}} \bigm|  X, Y_2 \big]
\label{eq:estimator}
\end{equation}
This is equal to $Y_1$ minus a correction term. The quantity $Y_1 - \E[Y_1 \mid X]$ is the residual after predicting $Y_1$ using $X$ alone. This residual is partly due to intrinsic variability in $p(Z_1 \mid X)$, which we want to preserve, and partly due to the effect of noise when producing the measurement $Y_1$. The variables $X$ and $Y_2$ should not be predictive of the intrinsic variability but may be predictive of the measurement noise. We subtract the portion of the residual that can be predicted using $X$ and $Y_2$ in order to correct $Y_1$ towards $Z_1$.

\noindent Next we prove some results for 3QS analogous to ones in~\cite{Scholkopf15}.
Under the model of Figure~\ref{fig:three-quarter-sibling} and an additional mild assumption, the 3QS-estimator more accurately approximates $Z_1$ than $Y_1$ does. 
\begin{theorem}
  Assume $Z_1 \indep Y_2 \mid X$ and $\E[Y_1 \mid X] = \E[Z_1 \mid X]$.
 Then
\[
\E\big[ (\hat{Z}_1  - Z_1 )^2 \big] \leq
\E\big[ ( Y_1 - Z_1 )^2 \big].
\]
\label{thm:non-additive}
\end{theorem}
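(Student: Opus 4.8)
The plan is to expand the squared error of the 3QS-estimator and observe that the cross term exactly matches the extra quadratic term, so that we are left subtracting a nonnegative quantity from $\E[(Y_1 - Z_1)^2]$. First I would abbreviate the residual as $R := Y_1 - \E[Y_1 \mid X]$, so the estimator reads $\hat Z_1 = Y_1 - \E[R \mid X, Y_2]$ and hence $\hat Z_1 - Z_1 = (Y_1 - Z_1) - \E[R \mid X, Y_2]$. Squaring and taking expectations gives
\[
\E\big[(\hat Z_1 - Z_1)^2\big] = \E\big[(Y_1 - Z_1)^2\big] - 2\,\E\big[(Y_1 - Z_1)\,\E[R \mid X, Y_2]\big] + \E\big[\E[R \mid X, Y_2]^2\big].
\]
It therefore suffices to show the cross term equals $\E\big[\E[R \mid X, Y_2]^2\big]$, since then the right-hand side collapses to $\E[(Y_1 - Z_1)^2] - \E\big[\E[R\mid X,Y_2]^2\big] \le \E[(Y_1 - Z_1)^2]$.

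To evaluate the cross term I would apply the tower property, conditioning on $(X, Y_2)$: because $\E[R \mid X, Y_2]$ is $(X,Y_2)$-measurable,
\[
\E\big[(Y_1 - Z_1)\,\E[R\mid X,Y_2]\big] = \E\big[\E[Y_1 - Z_1 \mid X, Y_2]\cdot \E[R \mid X, Y_2]\big].
\]
The crux is to identify $\E[Y_1 - Z_1 \mid X, Y_2]$ with $\E[R \mid X, Y_2]$. By definition of $R$ and $X$-measurability of $\E[Y_1 \mid X]$ we have $\E[R \mid X, Y_2] = \E[Y_1 \mid X, Y_2] - \E[Y_1 \mid X]$. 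On the other hand, $\E[Y_1 - Z_1 \mid X, Y_2] = \E[Y_1 \mid X, Y_2] - \E[Z_1 \mid X, Y_2]$, and the assumption $Z_1 \indep Y_2 \mid X$ gives $\E[Z_1 \mid X, Y_2] = \E[Z_1 \mid X]$, which equals $\E[Y_1 \mid X]$ by the hypothesis $\E[Y_1 \mid X] = \E[Z_1 \mid X]$. Hence both conditional expectations equal $\E[Y_1 \mid X, Y_2] - \E[Y_1 \mid X]$, so they coincide and the cross term becomes $\E\big[\E[R\mid X, Y_2]^2\big]$, completing the argument.

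I do not expect a serious obstacle: the proof is a short conditioning computation. The one point requiring care — and it is exactly where both hypotheses are used — is the identity $\E[Y_1 - Z_1 \mid X, Y_2] = \E[R \mid X, Y_2]$: the conditional independence $Z_1 \indep Y_2 \mid X$ is what lets us drop $Y_2$ from $\E[Z_1 \mid X, Y_2]$, and the unbiasedness $\E[Y_1 \mid X] = \E[Z_1 \mid X]$ is what matches the resulting term to the part of the residual already removed by regressing on $X$. It is also worth recording the exact identity that falls out of the computation, $\E[(\hat Z_1 - Z_1)^2] = \E[(Y_1 - Z_1)^2] - \E\big[\E[R \mid X, Y_2]^2\big]$, which quantifies the improvement and shows it is strict precisely when $Y_2$ carries information about the residual beyond what $X$ already explains.
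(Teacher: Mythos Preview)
Your proof is correct. The algebra is organized a bit differently from the paper's, but the substance is the same: both arguments hinge on the identity $\E[Y_1 - Z_1 \mid X, Y_2] = \E[Y_1 \mid X, Y_2] - \E[Y_1 \mid X]$, obtained by first using $Z_1 \indep Y_2 \mid X$ to drop $Y_2$ from $\E[Z_1 \mid X, Y_2]$ and then using $\E[Z_1 \mid X] = \E[Y_1 \mid X]$. The paper packages the remaining step as the one-line inequality ``conditional variance is no more than total variance'' applied to $W = Z_1 - Y_1$ (which has mean zero under the hypotheses), whereas you expand $(\hat Z_1 - Z_1)^2$ directly and compute the cross term via the tower property. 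Your route has the small bonus of producing the exact identity $\E[(\hat Z_1 - Z_1)^2] = \E[(Y_1 - Z_1)^2] - \E\big[\E[R \mid X, Y_2]^2\big]$, which is the same quantity the paper's variance decomposition would yield implicitly (namely $\var\big(\E[Z_1 - Y_1 \mid X, Y_2]\big)$) and makes explicit when the inequality is strict.
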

\vspace{-0.4cm}
\noindent The additional assumption that $\E[Z_1 \mid X] = \E[Y_1 \mid X]$ is satisfied by an additive zero-mean noise model, as well other models. Note that this result does \emph{not} require $N \indep X$, as implied by Figure~\ref{fig:three-quarter-sibling}.
Before proving the theorem, we give an alternative expression for
$\hat{Z}_1$ that is useful for analysis.

\begin{lemma} An equivalent expression for $\hat{Z}_1$ is
\begin{equation}
  \hat{Z}_1 = Y_1 - \E[Y_1 \mid X, Y_2] + \E[Y_1 \mid X]
  \label{eq:estimator-equiv}
\end{equation}
\end{lemma}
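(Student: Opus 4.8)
The plan is to start directly from the defining equation \eqref{eq:estimator} and simplify the inner conditional expectation using only elementary properties of conditional expectation. Writing $R = Y_1 - \E[Y_1 \mid X]$ for the residual, the estimator is $\hat{Z}_1 = Y_1 - \E[R \mid X, Y_2]$, so it suffices to show $\E[R \mid X, Y_2] = \E[Y_1 \mid X, Y_2] - \E[Y_1 \mid X]$.

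The first step is to apply linearity of conditional expectation to split $\E[R \mid X, Y_2] = \E[Y_1 \mid X, Y_2] - \E\big[\,\E[Y_1 \mid X] \bigm| X, Y_2\,\big]$. The key step is then to observe that $\E[Y_1 \mid X]$ is a $\sigma(X)$-measurable function, hence $\sigma(X, Y_2)$-measurable, so conditioning it on $(X, Y_2)$ returns it unchanged: $\E\big[\,\E[Y_1 \mid X] \bigm| X, Y_2\,\big] = \E[Y_1 \mid X]$ (the ``taking out what is known'' / measurability property). Substituting back and rearranging gives $\hat{Z}_1 = Y_1 - \E[Y_1 \mid X, Y_2] + \E[Y_1 \mid X]$, which is exactly \eqref{eq:estimator-equiv}.

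There is no real obstacle here: the lemma is a pure rewriting and the only point requiring care is justifying that the inner conditional expectation $\E[Y_1 \mid X]$ passes through the outer conditioning on $(X, Y_2)$ untouched, which follows because the conditioning $\sigma$-algebra only grows. I would state this in one or two lines and be done. The value of the lemma is entirely downstream: the form \eqref{eq:estimator-equiv} exhibits $\hat Z_1 - Y_1$ as a difference of two nested conditional expectations of $Y_1$, which is the representation that will make the variance/MSE comparison in Theorem~\ref{thm:non-additive} tractable.
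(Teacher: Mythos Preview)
Your proposal is correct and follows essentially the same approach as the paper: expand the outer conditional expectation by linearity, then use that $\E[Y_1 \mid X]$ is measurable with respect to $\sigma(X,Y_2)$ so conditioning on $(X,Y_2)$ leaves it unchanged. The only cosmetic difference is that the paper phrases the measurability step as ``$\E[Y_1 \mid X]$ is a deterministic function of $(X,Y_2)$, so $\E[g(U)\mid U]=g(U)$,'' whereas you state it in $\sigma$-algebra language; the content is identical.
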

\begin{proof}
From Eq.~\eqref{eq:estimator} we have
\begin{align*}
\hat{Z}_1
&= Y_1 - \E\big[Y_1 - \E[Y_1 \mid X] \bigm|  X, Y_2 \big] \\
&= Y_1 - \E[Y_1 \mid X, Y_2] +  \E \big[\E[Y_1 \mid X] \bigm|  X, Y_2 \big]\\
&=  Y_1 - \E[Y_1 \mid X, Y_2] +  \E[Y_1 \mid X]
\end{align*}
The last line holds because $\E[Y_1 \mid X]$ is a deterministic function of $(X, Y_2)$. For any random variable $U$ and deterministic function $g$ it is the case that $\E[g(U)\mid U] = g(U)$. 
\end{proof}

\begin{proof}[Proof of Theorem~\ref{thm:non-additive}]
First, note that $\E[Y_1 \mid X] = \E[Z_1 \mid X]$ implies $\E[Y_1] = \E[Z_1]$. Then
\begin{align}
  \E\Big[&\big(Z_1 - {Y}_1\big)^2\Big] \notag \\
&= \E\Big[ \big((Z_1 - \E Z_1 \big) - (Y_1 - \E Y_1) \big)^2 \Big] \label{eq:first} \\
&= \E\Big[ \big(Z_1 - Y_1 - \E[Z_1 - Y_1] \big)^2\Big] \notag \\
&\geq \E\Big[ \big(Z_1 - Y_1 - \E[Z_1 - Y_1 \mid Y_2, X] \big)^2\Big] \label{eq:condition} \\
&= \E\Big[ \big(Z_1 - Y_1 - \E[Z_1 \mid X] + \E[Y_1 \mid Y_2, X] \big)^2\Big] \label{eq:indep} \\
&= \E\Big[ \big(Z_1 - (Y_1 - \E[Y_1 \mid Y_2, X] + \E[Y_1 \mid X])\big)^2\Big] \label{eq:same-mean} \\
&= \E\Big[ \big(Z_1 - \hat{Z}_1\big)^2\Big] \notag
\end{align}
In Eq.~\eqref{eq:first}, we subtracted $\E Z_1$ and added $\E Y_1$, which are equal.
In Eq.~\eqref{eq:condition}, we used the fact that conditional variance is no more than total variance.
In Eq.~\eqref{eq:indep}, we used the fact that $Z_1 \indep Y_2 \mid X$.
In Eq.~\eqref{eq:same-mean}, we used $\E[Z_1 \mid X] = \E[Y_1 \mid X]$ and reordered terms.
\end{proof}

\subsection{Additive Noise Model}
Now, further assume the following additive form for $Y_1$:
\begin{equation}
  Y_1 = Z_1 + f(N).
  \label{eq:additive}
\end{equation}
Here $Z_1$ is the ``true'' value and $f(N)$ an additive error term due to $N$. More generally, we could have $Y_1 = \phi(Z_1) + f(N)$ where $\phi$ is an unknown transformation. Since we can never learn such a transformation from observations of $Y_1$,  we assume  the form in Eq.~\eqref{eq:additive}, which amounts to reconstructing the hidden variable \emph{after} transforming it to the same units as $Y_1$.

Under the additive noise model, it is possible to quantify the error of the 3QS-estimator for reconstructing $Z_1$.
\begin{theorem}
Assume $Z_1 \indep Y_2 \mid X$ and $N \indep X$. Under the additive model of Eq.~\eqref{eq:additive}, we have
\[
\E\big[ \big(\hat{Z}_1 - (Z_1 + \E[f(N)])\big)^2\big] = \var\big[ f(N) \mid X, Y_2 \big].
\]
\label{thm:additive}
\end{theorem}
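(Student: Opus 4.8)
The plan is to start from the equivalent form of the estimator in Eq.~\eqref{eq:estimator-equiv}, substitute the additive model $Y_1 = Z_1 + f(N)$, and then simplify using the two hypotheses $Z_1 \indep Y_2 \mid X$ and $N \indep X$. Concretely, I would write
\[
\hat{Z}_1 = \big(Z_1 + f(N)\big) - \E\big[Z_1 + f(N) \mid X, Y_2\big] + \E\big[Z_1 + f(N) \mid X\big]
\]
and expand each conditional expectation by linearity into a $Z_1$ part and an $f(N)$ part.

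Next I would use $Z_1 \indep Y_2 \mid X$ to replace $\E[Z_1 \mid X, Y_2]$ by $\E[Z_1 \mid X]$, so that the two $Z_1$ conditional-expectation terms cancel, and then use $N \indep X$ to replace $\E[f(N) \mid X]$ by the unconditional mean $\E[f(N)]$. After these substitutions the estimator collapses to
\[
\hat{Z}_1 = Z_1 + f(N) - \E[f(N) \mid X, Y_2] + \E[f(N)],
\]
so the reconstruction error is exactly
\[
\hat{Z}_1 - \big(Z_1 + \E[f(N)]\big) = f(N) - \E[f(N) \mid X, Y_2].
\]

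Finally I would square both sides, take expectations, and invoke the standard identity $\E\big[(W - \E[W \mid V])^2\big] = \E\big[\var[W \mid V]\big]$ with $W = f(N)$ and $V = (X, Y_2)$; reading the right-hand side $\var[f(N)\mid X, Y_2]$ of the theorem as this expected conditional variance yields the claim. The only steps that require care are ordering the two conditional-independence substitutions correctly and noting that $\E[Z_1 \mid X]$ is a function of $(X, Y_2)$, so the pull-out property used in the Lemma applies; there is no genuine analytic obstacle, since the additive structure makes the $Z_1$ terms cancel exactly. It is worth remarking why $N \indep X$ is needed: without it $\E[f(N)\mid X]$ does not reduce to a constant, and the error would pick up an extra $\E[f(N)\mid X] - \E[f(N)]$ term, so the clean variance identity would fail.
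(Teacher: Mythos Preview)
Your proposal is correct and follows essentially the same route as the paper: start from the equivalent form of $\hat{Z}_1$, plug in the additive model, use $Z_1 \indep Y_2 \mid X$ and $N \indep X$ to cancel the $Z_1$ terms and reduce $\E[f(N)\mid X]$ to $\E[f(N)]$, and then square the resulting error $f(N)-\E[f(N)\mid X,Y_2]$. Your observation that the right-hand side should be read as the expected conditional variance is a useful clarification that the paper leaves implicit.
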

\vspace{-0.2cm}
\begin{proof}
  \begin{align}
\hat{Z}_1
&= Y_1 - \E[Y_1 \mid X, Y_2] + \E[Y_1 \mid X] \notag \\
& \label{line2} \begin{aligned}
    = Z_1 &+ f(N) - \big(\E[Z_1 | X, Y_2] + \E[f(N) | X, Y_2]\big) \\
    &+ \big(\E[Z_1 \mid X] + \E[f(N) \mid X] \big)
  \end{aligned} \\
& \label{line3} \begin{aligned}
  = Z_1 &+ f(N) - \big(\E[Z_1 | X] + \E[f(N) | X, Y_2]\big) \\
  &+ \big(\E[Z_1 \mid X] + \E[f(N)] \big)
  \end{aligned} \\
&= Z_1 + \big(f(N) - \E[f(N) \mid X, Y_2] \big) + \E[f(N)] \label{line4}
\end{align}
Eq.~\eqref{line2} uses the additive expansion of $Y_1$ three times. Eq.~\eqref{line3} uses the facts that $Z_1 \indep Y_2 \mid X$ and $N \indep X$. Eq.~\eqref{line4} rearranges. Then, rearranging,
\[
\hat{Z}_1 - (Z_1 + \E[f(N)]) = f(N) - \E[f(N) \mid X, Y_2].
\]
Therefore,
\begin{align*}
  \E\big[ \big(\hat{Z}_1 \!-\! (Z_1 \!+\! \E[f(N)])\big)^2
    &= \E\big[ \big(f(N) \!-\! \E[f(N) | X, Y_2]\big)^2\big] \\
    & = \var[f(N) \mid X, Y_2].
\end{align*}
\end{proof}
\noindent
Theorem~\ref{thm:additive} says that it is possible to reconstruct $Z_1$---up to a constant additive offset equal to the mean of the measurement error---with squared error equal to the conditional variance of the measurement error given the observed variables $X$ and $Y_2$. If the measurement error is completely determined by the observed variables, then $Z_1 + \E[f(N)]$ is reconstructed exactly. 

\begin{corollary}
  If there is a function $\psi$ such that $f(N) = \psi(X, Y_2)$, then $\hat{Z}_1 = Z_1 + \E[f(N)]$.
  \label{cor:exact}
\end{corollary}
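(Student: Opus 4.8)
The plan is to read this off directly from Theorem~\ref{thm:additive}, since the corollary is essentially the degenerate case in which the conditional variance term vanishes. First I would observe that the hypothesis $f(N) = \psi(X, Y_2)$ makes $f(N)$ a deterministic function of the conditioning variables $(X, Y_2)$, so that $\E[f(N) \mid X, Y_2] = \psi(X, Y_2) = f(N)$ and hence $\var[f(N) \mid X, Y_2] = 0$. Applying Theorem~\ref{thm:additive} (whose standing assumptions $Z_1 \indep Y_2 \mid X$, $N \indep X$, and the additive form of Eq.~\eqref{eq:additive} are inherited here) then gives $\E\big[(\hat{Z}_1 - (Z_1 + \E[f(N)]))^2\big] = 0$, and a random variable with zero second moment is zero almost surely, yielding $\hat{Z}_1 = Z_1 + \E[f(N)]$.

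An even more direct route, which I would mention as an alternative, is to start from Eq.~\eqref{line4} in the proof of Theorem~\ref{thm:additive}, namely $\hat{Z}_1 = Z_1 + \big(f(N) - \E[f(N)\mid X, Y_2]\big) + \E[f(N)]$, and simply substitute $\E[f(N)\mid X, Y_2] = f(N)$ to collapse the middle term. This avoids the squared-error detour and makes the pointwise (a.s.) nature of the conclusion transparent.

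I do not expect any real obstacle: the only subtlety worth flagging is the almost-sure qualifier if one goes through the $L^2$ argument, and the implicit reliance on the additive model and independence assumptions of Theorem~\ref{thm:additive}, which should be stated as carried over. If desired, the proof can be written in one or two lines.
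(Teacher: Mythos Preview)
Your proposal is correct and matches the paper's own proof essentially line for line: the paper also observes that $\var[f(N)\mid X,Y_2]=\var[\psi(X,Y_2)\mid X,Y_2]=0$ and concludes $\hat{Z}_1-(Z_1+\E[f(N)])=0$ via Theorem~\ref{thm:additive}. Your alternative route through Eq.~\eqref{line4} is a fine and slightly more direct variant, but not materially different.
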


\begin{proof}[Proof]
  In this case
  $\var\big[f(N) \bigm| X, Y_2 \big] = \var\big[\psi(X, Y_2) \bigm|
    X, Y_2 \big] = 0 $, which implies that $\hat{Z}_1 \!- (Z_1 + \E[f(N)]) = 0$.
\end{proof}

\subsubsection{Implementation}
In practice, the conditional expectations in the 3QS-estimator are unknown, but can be replaced by regression models. The 3QS-estimator used in practice is:
\[
\hat{Z}_1 = Y_1 - \hat{\E}\big[Y_1 - \hat{\E}[Y_1 \mid X] \bigm|  X, Y_2 \big]
\]
where $\hat{\E}[B \mid A]$ is a regression model trained to predict $B$ from $A$.

In our application, the $Y_i$ variables are symmetric and represent counts of different species. We first fit a regression model $\hat{\E}[Y_i \mid X]$ using only process covariates for each species. Then, define $R_i = Y_i - \hat{\E}[Y_i \mid X]$ to be the residual for species $i$, and let $R_{-i}$ be the vector of residuals for all other species. We use the estimator
\[
\hat{Z}_i = Y_i - \hat{\E}\big[R_i \mid R_{-i}].
\]
This has a very simple interpretation. We subtract the portion of the residual for the target species that can be predicted using the residuals of the other species. This is a special case of the 3QS-estimator where the regression model using $(X, Y_{-i})$ as predictors is parameterized as a function of only the residual $Y_{-i} - \hat{\E}[Y_{-i} \mid X]$.


\section{Experiments}
In this section we experimentally demonstrate the ability of
3QS-regression to remove systematic measurement error in the presence of common causes using first synthetic data and then a moth survey data set from the Discover Life project.


 


\subsection{Synthetic Experiments}


\begin{figure}[t]
\begin{subfigure}{0.49\linewidth}
  \centering
  \includegraphics[width=\linewidth]{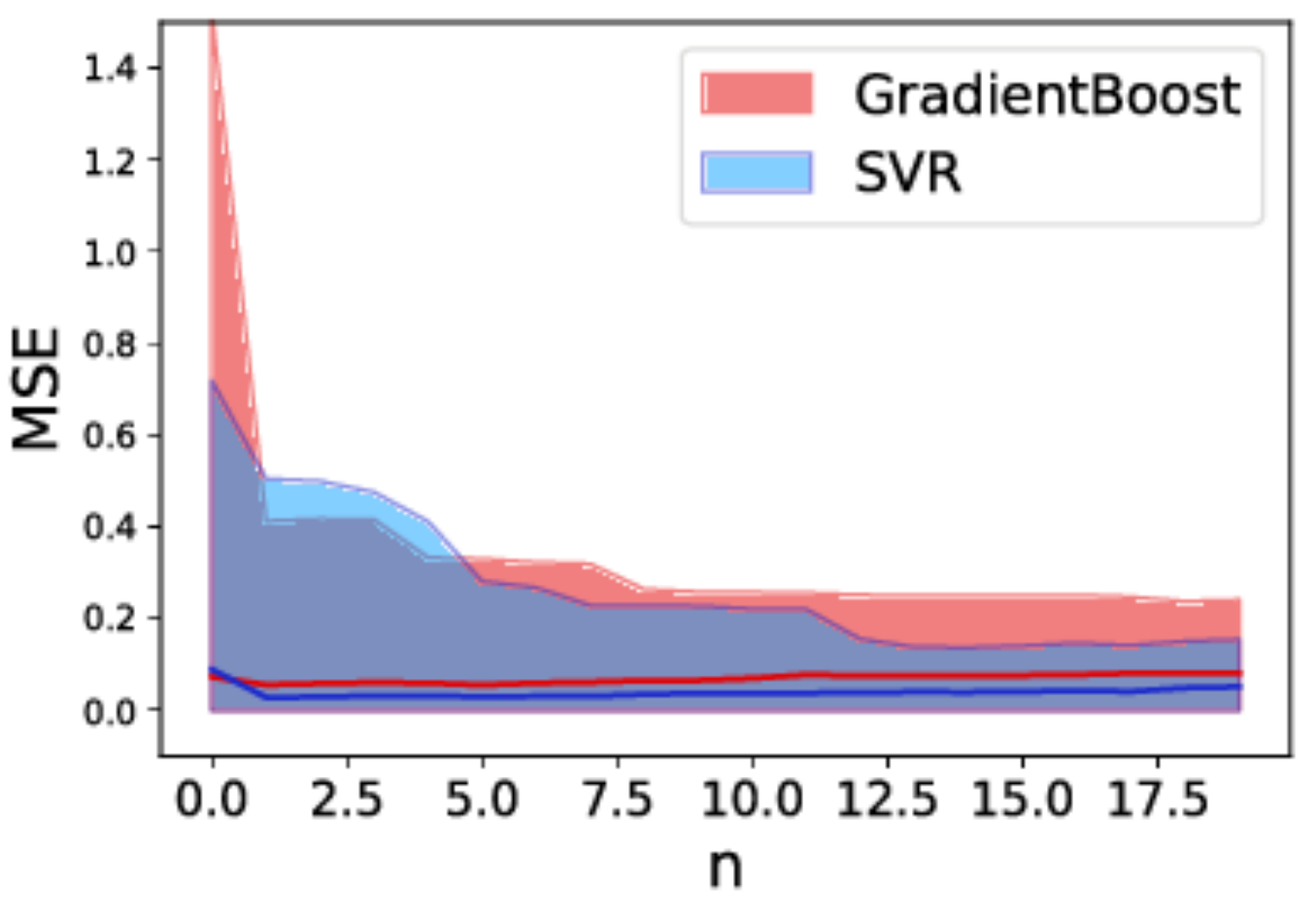}
  \caption{\label{fig:numY} }
\end{subfigure}
\begin{subfigure}[hb]{0.49\linewidth}
  \centering
  \includegraphics[width=\linewidth]{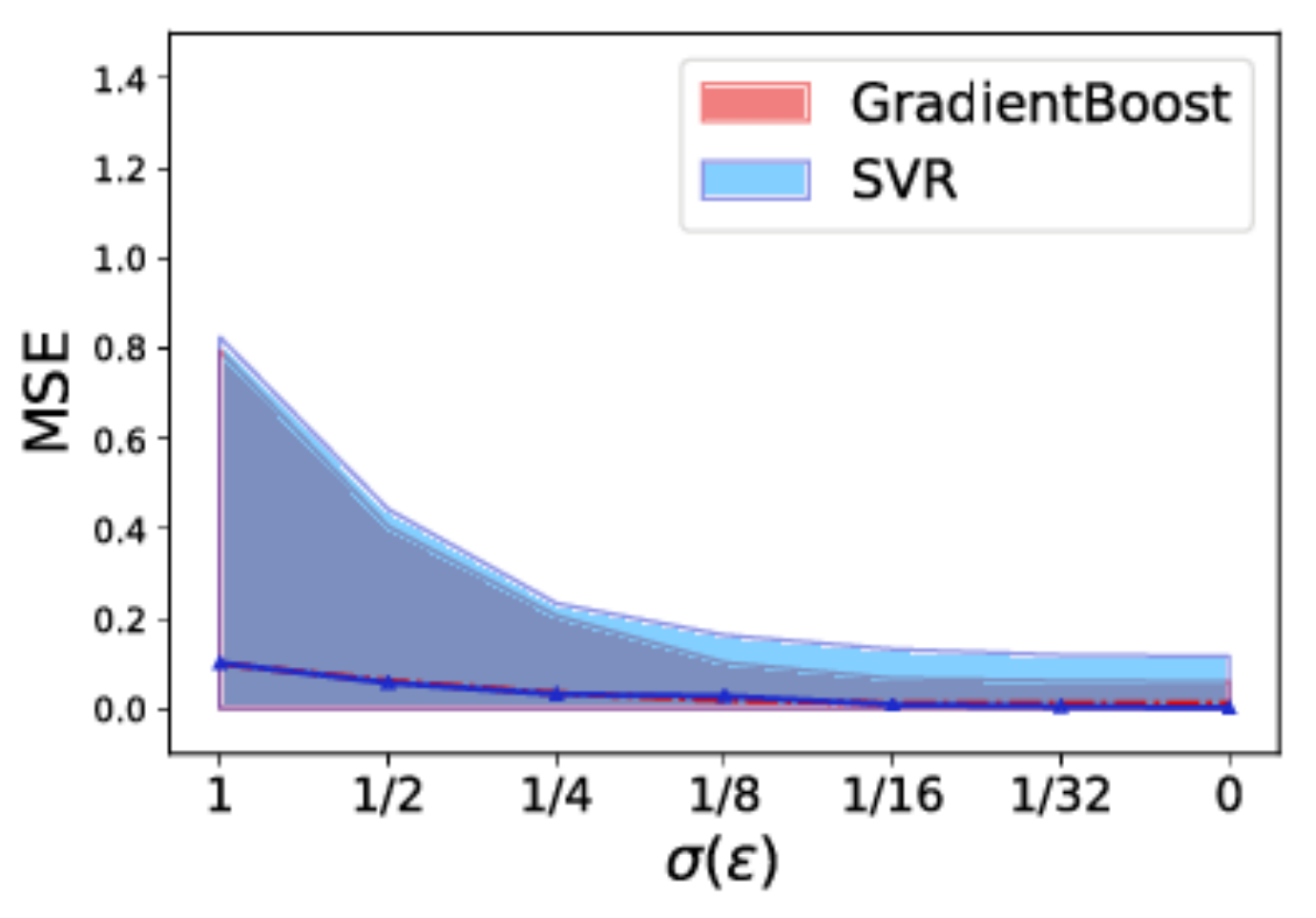}
  \caption{\label{fig:noiseY} }
\end{subfigure}
\begin{subfigure}{0.49\linewidth}
  \centering
  \includegraphics[width=\linewidth]{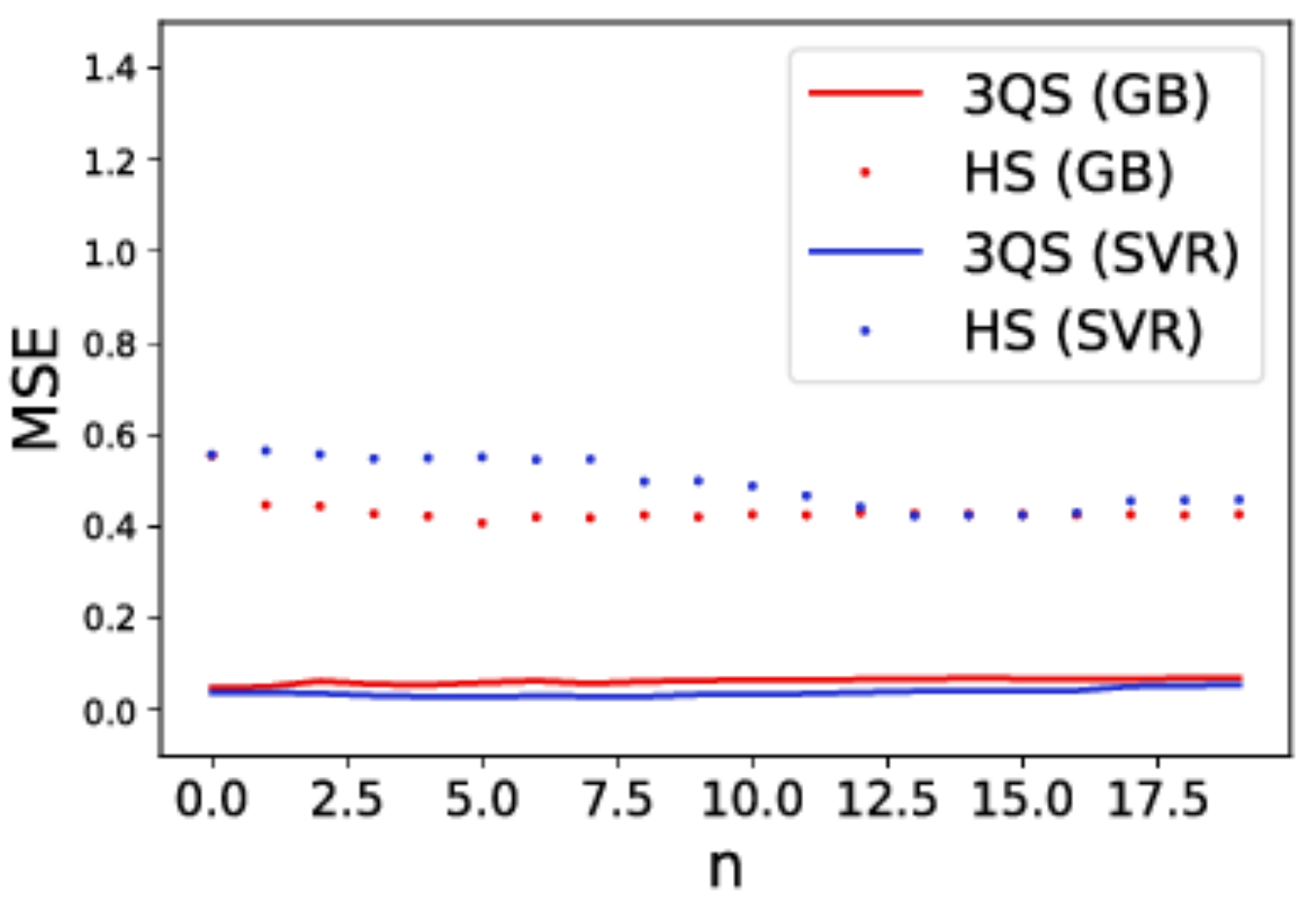}
  \caption{\label{fig:numYcomp} }
\end{subfigure}
\begin{subfigure}[hb]{0.49\linewidth}
  \centering
  \includegraphics[width=\linewidth]{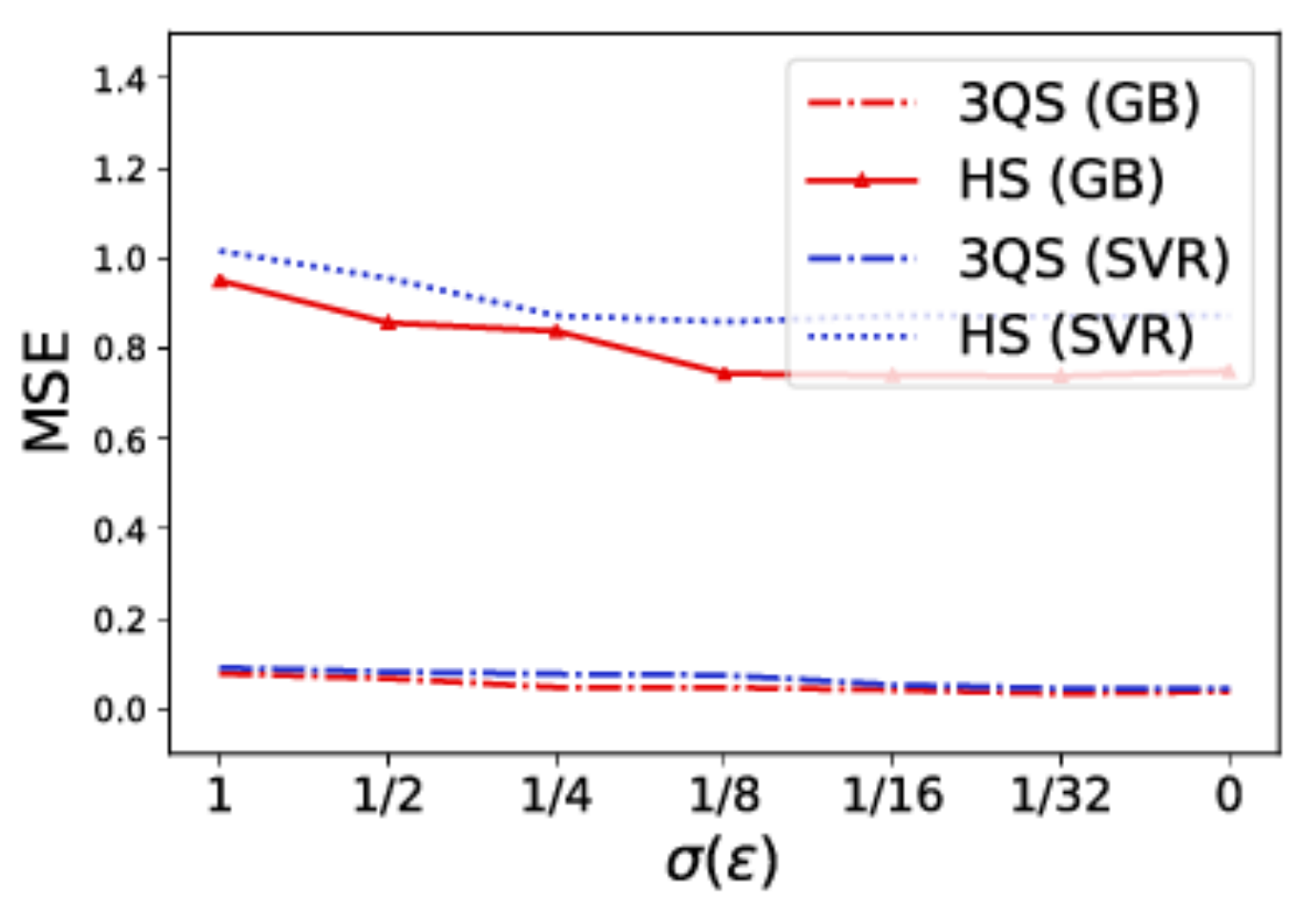}
  \caption{\label{fig:noiseYcomp} }
\end{subfigure}
\caption{Results on synthetic data. (a)~MSE vs species count $n$, (b)~MSE vs noise deviation $\sigma_\epsilon$, (c)~HS and 3QS MSE vs. $n$, (d)~HS and 3QS MSE vs. $\sigma_\epsilon$.}
\end{figure}

We first conducts experiments using synthetic data, for which the true
value of $Z_1$ is known.
This allows us to measure the ability of 3QS-regression to recover
$Z_1$ and evaluate its performance in settings where the auxiliary
measurements $Y_2 = Y_{-1}$ contain a varying amount of information about the noise mechanism.

\subsubsection{Methodology}
Our synthetic data generation is patterned off the experiments
of~\cite{Scholkopf15} for half-sibling regression, but extended to
include process covariates.
We simulated $n$ species (indexed by $i$) whose occurrences $Y_i$ are
determined as the following function of a process covariate $X \in \R$ and
noise variable $N \in \R$: 
\[
  Y_i = \underbrace{w^{(i)}_X X}_{Z_i} + \underbrace{g_i(w^{(i)}_N
    N)}_{f_i(N)} + \epsilon.
\]
The coefficients $w^{(i)}_X$ and $w^{(i)}_N$ are
drawn uniformly from $[-1, 1]$ for each species, and control the
relationship between the species occurrence and the process and noise
variables. The function $g_i$ is a sigmoid function with randomized
parameters (to control the inflection points, etc.). Finally $\epsilon \sim \N(0, \sigma^2_\epsilon)$ is
independent noise. 


We conduct two experiments to simulate decreasing the uncertainty about $f_1(N)$ given $Y_{-1}$. In the first case we set $\sigma^2_\epsilon = 0$ and increase the number of species, each with its own noise function $f_i$. As the number of species $n$ increases, we get more predictors of the error $f_1(N)$.
In the second experiment we fix $n = 2$ and $f_1=f_2$ (i.e the effect of noise on both species is exactly the same) while varying the noise $\sigma^2_\epsilon$. In either case, when the conditional variance of $f_1(N)$ given $(X,Y_{-1})$ reduces, Theorem~\ref{thm:additive} predicts a more accurate reconstruction.



For these experiments we used the alternate form of the estimator
given in Eq.~\ref{eq:estimator-equiv}, which means we fit regression
models $\hat{\E}[Y_i | X, Y_{-i}]$ and $\hat{\E}[Y_i | X]$ and
computed $\hat{Z}_i$ according to Eq.~\ref{eq:estimator-equiv}. The estimators could be shifted from $Z_i$ by any constant offset (cf. Theorem~\ref{thm:additive}), so we centered them to have zero-mean to match $Z_i$.
We conduct our experiments with two different regression algorithms---support-vector regression (SVR) and gradient boosted regression trees,
using implementations and default settings from the scikit-learn
package.\footnote{https://pypi.org/project/scikit-learn/}
Following \cite{Scholkopf15}, we create 20 different fixed instances
and measure the mean squared error (MSE) of our denoised counts
$\hat{Z}_i$ against true value $Z_i$.

\paragraph{Results.}
Figures ~\ref{fig:numY} and \ref{fig:noiseY} show the reconstruction
error as a function of the number of species $n$ and noise variance
$\sigma^2_\epsilon$, respectively. The figures plot mean MSE and
error across the runs.
Fig.~\ref{fig:numY} shows that increasing the number of species in
$Y_{-i}$ causes error to decrease with both regression methods.
This is expected, because with more predictors that are
correlated with $f(N)$, we can learn a better model for the systematic error.
Fig.~\ref{fig:noiseY} shows that as {\small $\sigma^2_\epsilon \to 0$} the
error also tends towards zero. This is also expected, because, in this case, $f(N)$ becomes a deterministic function of $X,Y_2$.
Figs.~\ref{fig:numYcomp} and~\ref{fig:noiseYcomp} compare the mean MSE of half-sibling (HS) regression against our method (3QS). HS regression performs poorly due to the common dependence of  $Y_i$ variables on $X$.




\subsection{Discover Life Moth Observations}
\label{s:moths}

\begin{figure*}[t]
  \centering
  \hspace*{-0.5cm}
    \includegraphics[width=0.7\textwidth, trim={0 2cm 0 0},clip]{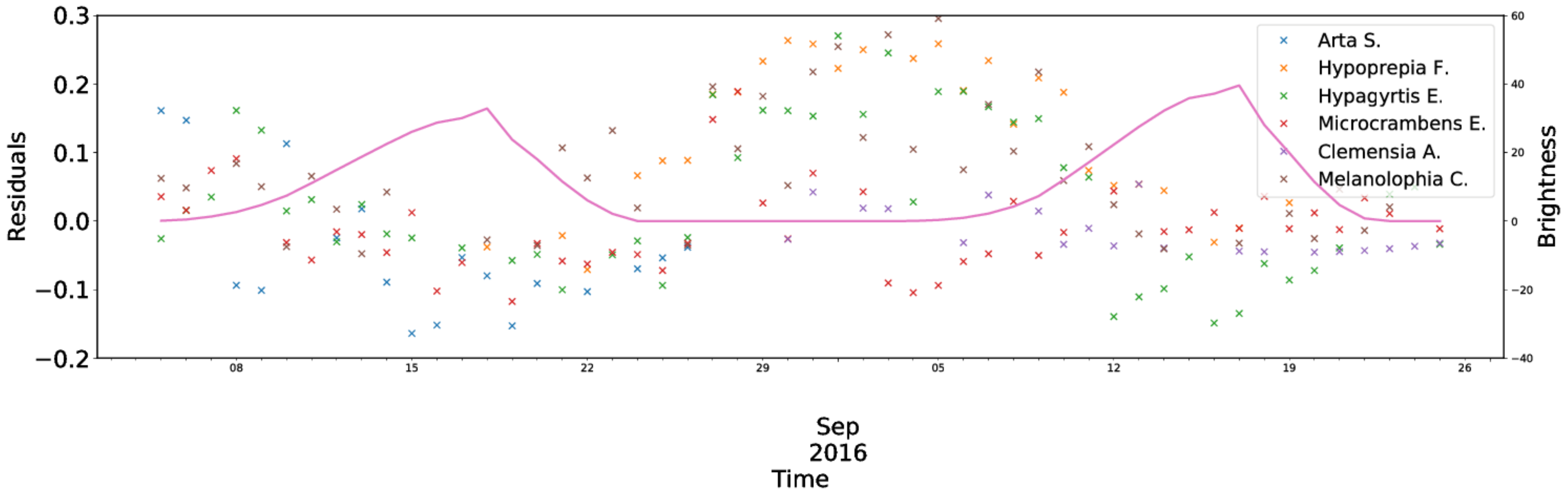}
    \caption{\label{fig:moth_resid_corr} { \small Residuals and moon brightness vs. time. Residuals are calculated relative to a predictive model fit across several years.}}
\vspace{-0.2cm}
\end{figure*}

\begin{figure}[t]
\centering
  \includegraphics[width=0.9\columnwidth]{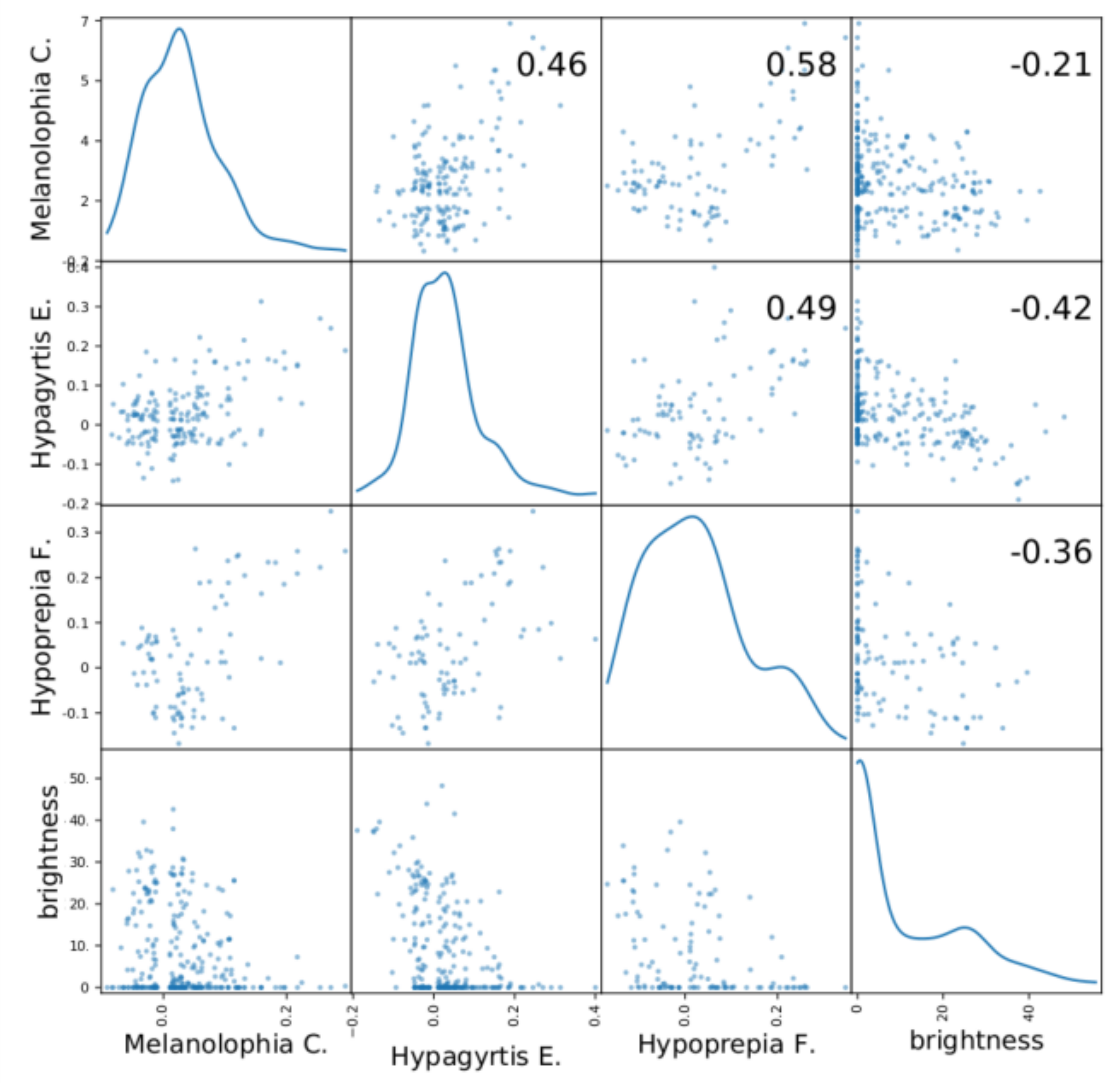}
  \caption{\label{fig:resid_scatter} {\small Pairwise scatter plots of residuals for common species and moon brightness. Inset numbers are correlations.}}
\end{figure}

Our second set of experiments use moth survey data from the Discover
Life project\footnote{https://www.discoverlife.org/moth} for studying spatio-temporal variation in moth communities. This dataset consists of counts of moths of different species collected at regular intervals at different study sites by both citizen scientists and moth experts. The protocol involves using artificial light to attract moths to a white surface and then photographing and identifying each specimen.
The dataset spans eight years and has been used for analysing seasonal flights \cite{pickering16a}, comparative taxonomy \cite{pickering16b} and other studies.
The data set is unique in its temporal resolution---at some of the sites, counts have been conducted almost every day for many years.

A typical use of this type of data is to estimate smooth curves of species abundance versus time such as the ones shown in Fig.~\ref{fig:moth-example} (see also~\cite{dennis2013indexing}). The smooth curve is then used as a proxy of the actual abundance at a given site in a given year to create indices of population size and to analyze temporal patterns, such as the timing of different generations (many species occur in discrete generations each year, e.g., see \emph{H. fucosa} in Fig.~\ref{fig:moth-example}). Scientists are especially interested in how populations and timing vary across years to understand population dynamics and potential links with external factors such as climate. 

\subsubsection{Moon Brightness}
A known systematic bias in moth counting studies is lunar brightness. On
nights when the moon is bright, moths are less likely to be attracted to the
white surface where they are photographed and counted.
We show evidence of this in Figs.~\ref{fig:moth_resid_corr} and~\ref{fig:resid_scatter}. Fig.~\ref{fig:moth_resid_corr} shows the residuals of moth counts with respect to a fitted model of count vs. day-of-year, together with moon brightness. The residuals track the moon phases and are anti-correlated with moon brightness. For example notice the significantly positive residual during early September which seems to rise exactly as lunar brightness reaches its nadir.
Fig.~\ref{fig:resid_scatter} shows pairwise scatter plots and correlation values for the residuals of three moth species together with moon brightness. All residuals are negatively correlated with moon brightness, and positively correlated with each other, with moon brightness being one contributing factor. In this example residuals are generally more correlated with each other than with moon brightness. One explanation is that moon brightness is measured using astronomical calculations and does not correspond exactly to brightness on the ground, which depends on factors like cloud cover. A second explanation is the effect of other factors besides brightness on detectability. A third (and less desirable) explanation is that there are other causal factors that are unmeasured. The reality is likely some combination of the three.
\subsubsection{Overall Experimental Strategy}
Our hypothesis is that moon brightness is one source of systematic
measurement error that can be effectively detected and removed using
3QS-regression. To test this, we will use moon brightness as an
external variable that is not available to our methods, and test the
correlation of moth counts with brightness before and after applying
denoising techniques, as well as the accuracy of predictive models on
``gold-standard'' test nights when lunar brightness is minimal,
so this particular source of variability is removed. 

Our overall goal is to develop the best smoothed models
of abundance over time for individual years
(cf. Fig.~\ref{fig:moth-example}), and we
hypothesize that correcting systematic errors will help.
Evaluation is challenging  because the true abundance
in a given year is unknown.
Our approach will be to fit smooth models in individual
years and test their predictive accuracy for \emph{other}
years. It is not our hope that the predictive accuracy is perfect ---
indeed, we \emph{want} our model to preserve variability across
years.
However, another source of such variability is differently aligned moon phases, which affect detection but not true abundance.
A model that corrects for such systematic
errors should therefore generalize better to other years
by reducing this source of variation.

\subsubsection{Methodology}
We choose moth counts from one site (Blue Heron Drive) for 2013
through 2018. Lunar brightness for the time of observation is computed with the software Pyephem\footnote{https://pypi.org/project/pyephem/}
using the latitude and longitude of the site.  We use the log-transformed counts of the most
common 10 species. 

We hold out one year at a time for testing and make predictions using
each other year, for a total of $20$ train-year / test-year
pairs. When predicting one test year, we first use all \emph{four}
other years to fit the regression models $\hat{\E}[Y_i | X]$ and
$\hat{\E}[R_i | R_{-i}]$ where $R_i = Y_i - \hat{\E}[Y_i | X]$ is the
residual. In all experiments, $X$ consists only of day-of-year.
We then compute $\hat{Z}_i$ for each training
year using the 3QS-estimator, fit a smoothed regression
model $\hat{\E}[\hat{Z}_i | X]$ to the denoised counts for a \emph{single} training year at a time, and use the
model to predict on the test year. This is repeated for all train-year / test-year pairs and for all species.
We compare 3QS-regression to multiple baselines,
including using the original measurements $Y_1$ to estimate $Z_1$, as well as the half-sibling (HS) regression  method of \cite{Scholkopf15}.
The
pyGAM package ~\cite{pyGAM} for generalized additive models (GAMs)
is used in all regression models.

\begin{table}[t]
  \small
  \centering
\begin{tabular}{|c|ccc|cc|}
  \hline
  & \multicolumn{3}{|c|}{Correlation with MB} &
  \multicolumn{2}{|c|}{Std. dev. \%} \\
Species & { $Y$} & {$\hat{Z}_{\text{HS}}$} & {$\hat{Z}_{\text{3QS}}$} & {$\hat{Z}_{\text{HS}}$} & {$\hat{Z}_{\text{3QS}}$}\\
\hline
\hline
\small Halysidota H. & -0.20 & -0.14 & -0.16 & 0.91 & 0.96\\
\small Hypoprepia F. & -0.15 & -0.01 & -0.04 & 0.80 & 0.90\\
\small Hypagyrtis E. & -0.18 & -0.09  & -0.11 &0.82 & 0.95\\
\small Microcrambus E.  & -0.16 & -0.03 & -0.10 & 0.84 & 0.97\\
\small Clemensia A. & -0.13 & -0.09 & -0.05 & 0.92 & 0.95\\
\small Lochmaeus B. & -0.13 & -0.01 & -0.05 &0.82 & 0.92\\
\small Melanolophia C. & -0.29 & -0.22 & -0.20 &0.77 &0.88\\
\small Iridopsis D. & -0.18 & -0.15 & -0.13 & 0.94 & 0.97\\
\hline
\end{tabular}
\caption{\label{tab:exp2} {\small Correlation with lunar brightness
  before and after denoising, and retained standard deviation as a fraction of that of $Y$}}
\end{table}

\begin{figure}[t]
  \begin{subfigure}[hb]{0.49\linewidth}
    \centering
    \includegraphics[width=\linewidth]{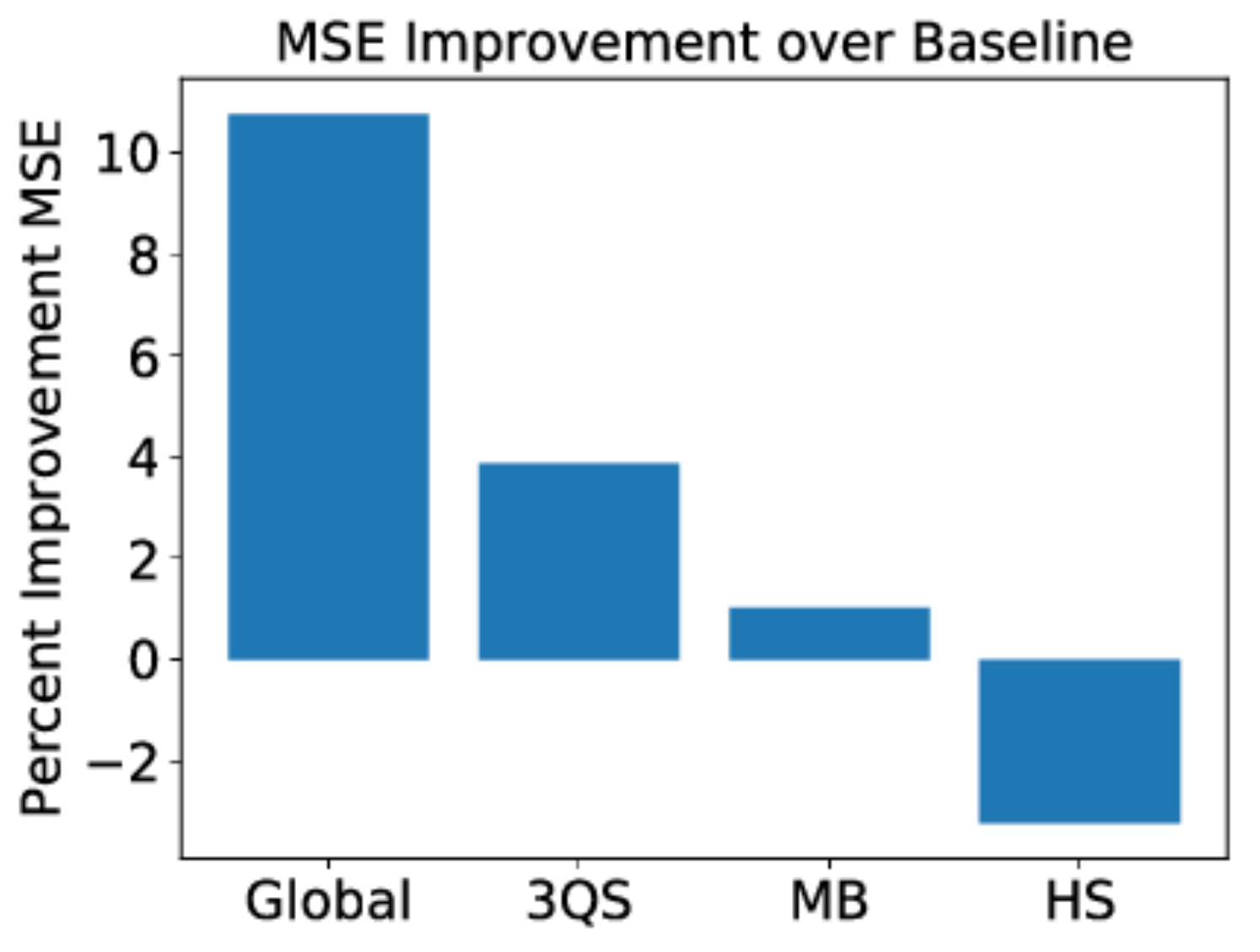}
    \caption{\label{fig:mse_bar}}
  \end{subfigure}
  \begin{subfigure}[hb]{0.47\linewidth}
    \centering
    \includegraphics[width=\linewidth]{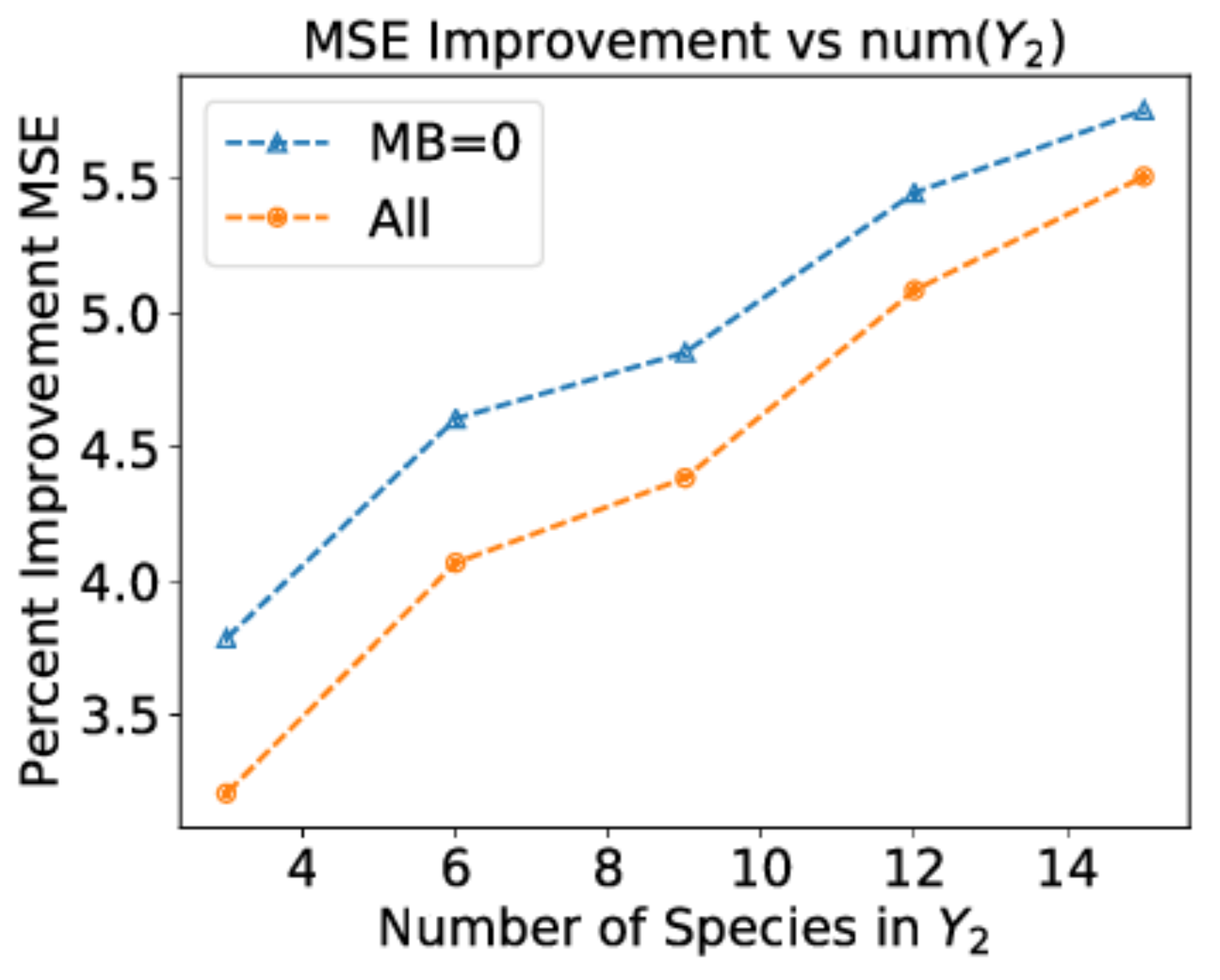}
    \caption{\label{fig:moth_err_vs_num}}
  \end{subfigure}
  \caption{Average percent improvement in predictive MSE relative to a GAM fitted to the raw counts. (a) Different estimators. (b) 3QS with increasing number of species.}
\end{figure}


\paragraph{Results: Moon Brightness.}
We first compare the relationship between counts and moon brightness
before and after 3QS-regression.  
It is expected that most species will have negative correlation
with lunar brightness initially, but the correlation will be significantly less after denoising.
The results are presented in Table
\ref{tab:exp2}.
Typical correlations between raw counts and lunar
brightness are on the order of -15\% to -20\%;
after denosing, the magnitude of correlations decrease by an average
of about 7\%.
The HS-regression estimator is shown for comparison, and also
decorrelates the counts from lunar brightness.
However if counts are correlated due to a common cause, HS-regression
will also remove some of the intrinsic (non-noise) variability (the
``throwing out the baby with the bathwater'' problem).
This can also be seen in Table~\ref{tab:exp2}: the overall variance of
$\hat{Z}_{\text{HS}}$ is significantly reduced relative to the raw counts.
We will show in our next experiment this removed variance corresponds to true variability that should not be removed.


\paragraph{Results: Predictive Accuracy.}
Fig.~\ref{fig:mse_bar} shows predictive accuracy of several methods averaged over all species and train-year
/ test-year pairs.
 The numbers are percent improvement in MSE relative
to the baseline of a GAM fit to the noisy counts in the training
year. MSE is computed only on
data from the test-year with moon brightness zero.
``Global'' is an oracle model shown for comparison. It is trained on raw counts of four training years instead of one. By fitting one model to multiple training years, it is expected to smooth out both sources of year-to-year variability (intrinsic and moon phase) and therefore predict better on a held-out year. However, this is \emph{not} our modeling goal --- we want to \emph{preserve} intrinsic year-to-year variability and eliminate variability due to moon phase.
``MB'' is a model that includes moon brightness as a feature to model detection variability.
The results show that 3QS outperforms all competitors, including the
MB model that has access to moon brightness as a feature. The global
model performs better, as expected. The HS model is worse than the
baseline due to removing intrinsic variability. Fig.~\ref{fig:moth_err_vs_num} shows the impact of using a greater number of species for $Y_2$. MSE is reported for the full test set as well as the moon-brightness zero set.
By Theorem~\ref{thm:additive}, we expect more species to reduce the conditional
variance and therefore improve accuracy, which is borne out in
Fig.~\ref{fig:moth_err_vs_num} on both test-sets.
Accuracy on the full test set is worse than on the moon-brightness
zero test set, which is expected due to extra variability in the test set.

\section{Conclusion}
%
Building on recent work on causal models, we presented three-quarter-sibling (3QS) regression to remove systematic measurement error for species that share observed common causes. 
We theoretically analyzed 3QS and presented empirical evidence of its success.
%
A future line of work would be to use simultaneous measurements to quantify uncertainty, possibly combined with Bayesian methods~\cite{Ellison2004}.
Our method may also be useful applied to other domains, such as under-reporting of drug use~\cite{adams2019learning}. A final question would be to address confounding in the presence of non-independent noise, i.e., when $X$ is not independent of $N$.

\clearpage

\bibliographystyle{named}
\bibliography{sibling}

\begin{thebibliography}{}

\bibitem[\protect\citeauthoryear{Adams \bgroup \em et al.\egroup
  }{2019}]{adams2019learning}
Roy Adams, Yuelong Ji, Xiaobin Wang, and Suchi Saria.
\newblock Learning models from data with measurement error: Tackling
  underreporting.
\newblock {\em arXiv preprint arXiv:1901.09060}, pages 1--8, 2019.

\bibitem[\protect\citeauthoryear{Dail and Madsen}{2011}]{Dail2011}
David Dail and Lisa Madsen.
\newblock {Models for estimating abundance from repeated counts of an open
  metapopulation.}
\newblock {\em Biometrics}, 67(2):577--87, 2011.

\bibitem[\protect\citeauthoryear{Dennis \bgroup \em et al.\egroup
  }{2013}]{dennis2013indexing}
Emily~B. Dennis, Stephen~N. Freeman, Tom Brereton, and David~B. Roy.
\newblock Indexing butterfly abundance whilst accounting for missing counts and
  variability in seasonal pattern.
\newblock {\em Methods in Ecology and Evolution}, 4(7):637--645, 2013.

\bibitem[\protect\citeauthoryear{Ellison}{2004}]{Ellison2004}
Aaron Ellison.
\newblock Bayesian inference in ecology.
\newblock {\em Ecology Letters}, 7(6):509--52, 2004.

\bibitem[\protect\citeauthoryear{Fr{\'e}nay and
  Verleysen}{2014}]{frenay2014classification}
Beno{\^\i}t Fr{\'e}nay and Michel Verleysen.
\newblock Classification in the presence of label noise: a survey.
\newblock {\em IEEE transactions on neural networks and learning systems},
  25(5):845--869, 2014.

\bibitem[\protect\citeauthoryear{Hutchinson \bgroup \em et al.\egroup
  }{2017}]{hutchinson2017species}
Rebecca~A. Hutchinson, Liqiang He, and Sarah~C Emerson.
\newblock Species distribution modeling of citizen science data as a
  classification problem with class-conditional noise.
\newblock In {\em AAAI}, pages 4516--4523, 2017.

\bibitem[\protect\citeauthoryear{Kelling \bgroup \em et al.\egroup
  }{2015}]{kelling2015can}
Steve Kelling, Alison Johnston, Wesley~M Hochachka, Marshall Iliff, Daniel
  Fink, Jeff Gerbracht, Carl Lagoze, Frank~A La~Sorte, Travis Moore, Andrea
  Wiggins, et~al.
\newblock Can observation skills of citizen scientists be estimated using
  species accumulation curves?
\newblock {\em PLoS One}, 10(10):e0139600, 2015.

\bibitem[\protect\citeauthoryear{Knape and Korner-Nievergelt}{2015}]{Knape2015}
Jonas Knape and Fr{\"a}nzi Korner-Nievergelt.
\newblock Estimates from non-replicated population surveys rely on critical
  assumptions.
\newblock {\em Methods in Ecology and Evolution}, 6(3):298--306, 2015.

\bibitem[\protect\citeauthoryear{Knape and
  Korner-Nievergelt}{2016}]{knape2016assumptions}
Jonas Knape and Fr{\"a}nzi Korner-Nievergelt.
\newblock On assumptions behind estimates of abundance from counts at multiple
  sites.
\newblock {\em Methods in Ecology and Evolution}, 7(2):206--209, 2016.

\bibitem[\protect\citeauthoryear{Lele \bgroup \em et al.\egroup
  }{2012}]{lele2012dealing}
Subhash~R. Lele, Monica Moreno, and Erin Bayne.
\newblock Dealing with detection error in site occupancy surveys: what can we
  do with a single survey?
\newblock {\em Journal of Plant Ecology}, 5(1):22--31, 2012.

\bibitem[\protect\citeauthoryear{MacKenzie \bgroup \em et al.\egroup
  }{2002}]{MacKenzie2002}
Darryl~I. MacKenzie, James~D. Nichols, Gideon~B. Lachman, Sam Droege, Andrew
  Royle, and Catherine~A. Langtimm.
\newblock {Estimating site occupancy rates when detection probabilities are
  less than one}.
\newblock {\em Ecology}, 83(8):2248--2255, 2002.

\bibitem[\protect\citeauthoryear{Nettleton \bgroup \em et al.\egroup
  }{2010}]{nettleton2010study}
David~F. Nettleton, Albert Orriols-Puig, and Albert Fornells.
\newblock A study of the effect of different types of noise on the precision of
  supervised learning techniques.
\newblock {\em Artificial intelligence review}, 33(4):275--306, 2010.

\bibitem[\protect\citeauthoryear{Pickering \bgroup \em et al.\egroup
  }{2016}]{pickering16b}
John Pickering, Dorothy Madamba, Tori Staples, and Rebecca Walcott.
\newblock Status of moth diversity and taxonomy: a comparison between {Africa}
  and {North America} north of {Mexico}.
\newblock {\em S. Lep. News}, 38(3):241--248, 2016.

\bibitem[\protect\citeauthoryear{Pickering}{2015}]{pickering16a}
John Pickering.
\newblock Why fly now? {P}upa banks, aposematism, and other factors that may
  explain observed moth flight activity.
\newblock {\em S. Lep. News}, 38(1):67--72, 2015.

\bibitem[\protect\citeauthoryear{Royle}{2004}]{Royle2004}
J.~Andrew Royle.
\newblock {N}-mixture models for estimating population size from spatially
  replicated counts.
\newblock {\em Biometrics}, 60(1):108--115, 2004.

\bibitem[\protect\citeauthoryear{Sch\"{o}lkopf \bgroup \em et al.\egroup
  }{2015}]{Scholkopf15}
Bernhard Sch\"{o}lkopf, David Hogg, Dun Wang, Dan Foreman-Mackey, Dominik
  Janzing, Carl-Johann Simon-Gabriel, and Jonas Peters.
\newblock Removing systematic errors for exoplanet search via latent causes.
\newblock In {\em Proceedings of the 32nd International Conference on Machine
  Learning (ICML)}, pages 2218--2226. PMLR, 2015.

\bibitem[\protect\citeauthoryear{Servén and Brummitt}{2018}]{pyGAM}
Daniel Servén and Charlie Brummitt.
\newblock {pyGAM}: Generalized additive models in {Python}, March 2018.

\bibitem[\protect\citeauthoryear{S{\'o}lymos and
  Lele}{2016}]{solymos2016revisiting}
P{\'e}ter S{\'o}lymos and Subhash~R. Lele.
\newblock Revisiting resource selection probability functions and single-visit
  methods: clarification and extensions.
\newblock {\em Methods in Ecology and Evolution}, 7(2):196--205, 2016.

\bibitem[\protect\citeauthoryear{Yu \bgroup \em et al.\egroup
  }{2010}]{yu2010modeling}
Jun Yu, Weng-Keen Wong, and Rebecca~A. Hutchinson.
\newblock Modeling experts and novices in citizen science data for species
  distribution modeling.
\newblock In {\em 2010 IEEE International Conference on Data Mining (ICDM)},
  pages 1157--1162. IEEE, 2010.

\bibitem[\protect\citeauthoryear{Yu \bgroup \em et al.\egroup
  }{2014a}]{yu2014latent}
Jun Yu, Rebecca~A. Hutchinson, and Weng-Keen Wong.
\newblock A latent variable model for discovering bird species commonly
  misidentified by citizen scientists.
\newblock In {\em Twenty-Eighth AAAI Conference on Artificial Intelligence},
  2014.

\bibitem[\protect\citeauthoryear{Yu \bgroup \em et al.\egroup
  }{2014b}]{yu2014clustering}
Jun Yu, Weng-Keen Wong, and Steve Kelling.
\newblock Clustering species accumulation curves to identify skill levels of
  citizen scientists participating in the {eBird} project.
\newblock In {\em Twenty-Sixth IAAI Conference}, 2014.

\end{thebibliography}




\end{document}